\let\realbfseries=\bfseries
\def\bfseries{\realbfseries\boldmath}
\newcommand{\hide}[1]{}
\newcommand{\ABox}{
\raisebox{3pt}{\framebox[6pt]{\rule{6pt}{0pt}}}
}
\newenvironment{proof}{{\bf Proof:}}{\hfill\ABox}
\newtheorem{theorem}{{\bf Theorem}}
\newtheorem{lemma}{Lemma}
\newtheorem{question}{Question}
\newcommand{\lemlab}[1]{\label{lemma:#1}}
\newcommand{\thmlab}[1]{\label{thm:#1}}
\newcommand{\figlab}[1]{\label{fig:#1}}
\newcommand{\seclab}[1]{\label{sec:#1}}
\newcommand{\lemref}[1]{\ref{lemma:#1}}
\newcommand{\figref}[1]{\ref{fig:#1}}
\def\Z{{\mathbb{Z}}}
\def\emph#1{\textit{\textbf{\boldmath #1}}}
\title{%
Some Polycubes Have No Edge Zipper Unfolding\footnote{
An extended abstract  of this paper is to appear in the
\emph{Canad. Conf. Comput. Geom.}, Aug. 2020.}
} 
\author{%
Erik D. Demaine%
     \thanks{MIT Computer Science and Artificial Intelligence Laboratory,
       Cambridge, MA 02139, USA,
       \protect\url{{edemaine,mdemaine}@mit.edu}}
\and
   Martin L. Demaine\footnotemark[1]
\and
David Eppstein%
\thanks{
Computer Science Department
University of California, Irvine, CA 92679, USA
\protect\url{eppstein@uci.edu}
Supported in part by NSF grants  CCF-1618301 and CCF-1616248}
\and
Joseph O'Rourke%
    \thanks{Department of Computer Science, Smith College, Northampton, MA
      01063, USA.
      \protect\url{jorourke@smith.edu}.}
}
\begin{document}
\maketitle

\begin{abstract}
It is unknown whether every polycube (polyhedron constructed by gluing cubes
face-to-face) has an edge unfolding, that is, cuts along edges of the cubes
that unfolds the polycube to a single nonoverlapping polygon in the plane.
Here we construct 
polycubes that have no \emph{edge zipper unfolding}
where the cut edges are further restricted to form a path.
\end{abstract}

\section{Introduction}
\seclab{Introduction}

A \emph{polycube} $P$ is an object constructed by gluing cubes whole-face to whole-face,
such that its surface is a manifold. 
Thus the neighborhood of every surface point is a disk; so there are no edge-edge nor
vertex-vertex nonmanifold surface touchings.
Here we only consider polycubes of genus zero.
The \emph{edges} of a polycube are all the cube edges on the surface, even when those edges
are shared between two coplanar faces.
Similarly, the \emph{vertices} of a polycube are all the cube vertices on the surface, even when those vertices
are \emph{flat}, incident to $360^\circ$ total face angle.
Such polycube flat vertices have degree $4$.
It will be useful to distinguish these flat vertices from 
\emph{corner vertices}, nonflat vertices with total incident angle
$\neq 360^\circ$ (degree $3$, $5$, or $6$).
For a polycube $P$, let its \emph{$1$-skeleton graph $G_P$} include every
vertex and edge of $P$, with vertices marked as either corner or flat.

It is an open problem to determine whether every polycube has an
\emph{edge unfolding} (also called a \emph{grid unfolding}) ---
a tree in the $1$-skeleton that spans all corner vertices (but need not include flat vertices)
which, when cut, unfolds the surface to a \emph{net}, a planar nonoverlapping polygon~\cite{JORsurvey}.
By \emph{nonoverlapping} we mean that no two points, each interior
to a face, are mapped to the same point in the plane.
This definition allows two boundary edges to coincide in the net,
so the polygon may be ``weakly simple.''
The intent is that we want to be able to cut out the net and refold to $P$.

It would be remarkable if every polycube could be edge unfolded, but no
counterexample is known.
There has been considerable exploration of orthogonal polyhedra, a more general type
of object, for which there are examples that cannot be edge-unfolded~\cite{bddloorw-uscop-98}.
(See~\cite{damian2018unfolding} for citations to earlier work.)
But polycubes have more edges in their $1$-skeleton graphs for the cut tree
to follow than do orthogonal polyhedra,
so it is conceivably easier to edge-unfold polycubes.

A restriction of edge unfolding studied
in~\cite{s-cpcn-75, lddss-zupc-10, o-fzupp-10, ddu-zudp-13}
is \emph{edge zipper unfolding} (also called \emph{Hamiltonian unfolding}).
A \emph{zipper} unfolding has a cut tree that is a path
(so that the surface could be ``unzipped'' by a single zipper).
It is apparently unknown whether even the highly restricted edge zipper
unfolding could unfold every polycube to a net.
The result of this note is to settle this question in the negative:
polycubes are constructed none of which have an
edge zipper unfolding.
Two polycubes in particular,
shown in Fig.~\figref{TwoPolycubes},
have no such unfolding. Other polycubes with the same property
are built upon these two.
\begin{figure}[htbp]
\centering
\includegraphics[width=0.75\linewidth]{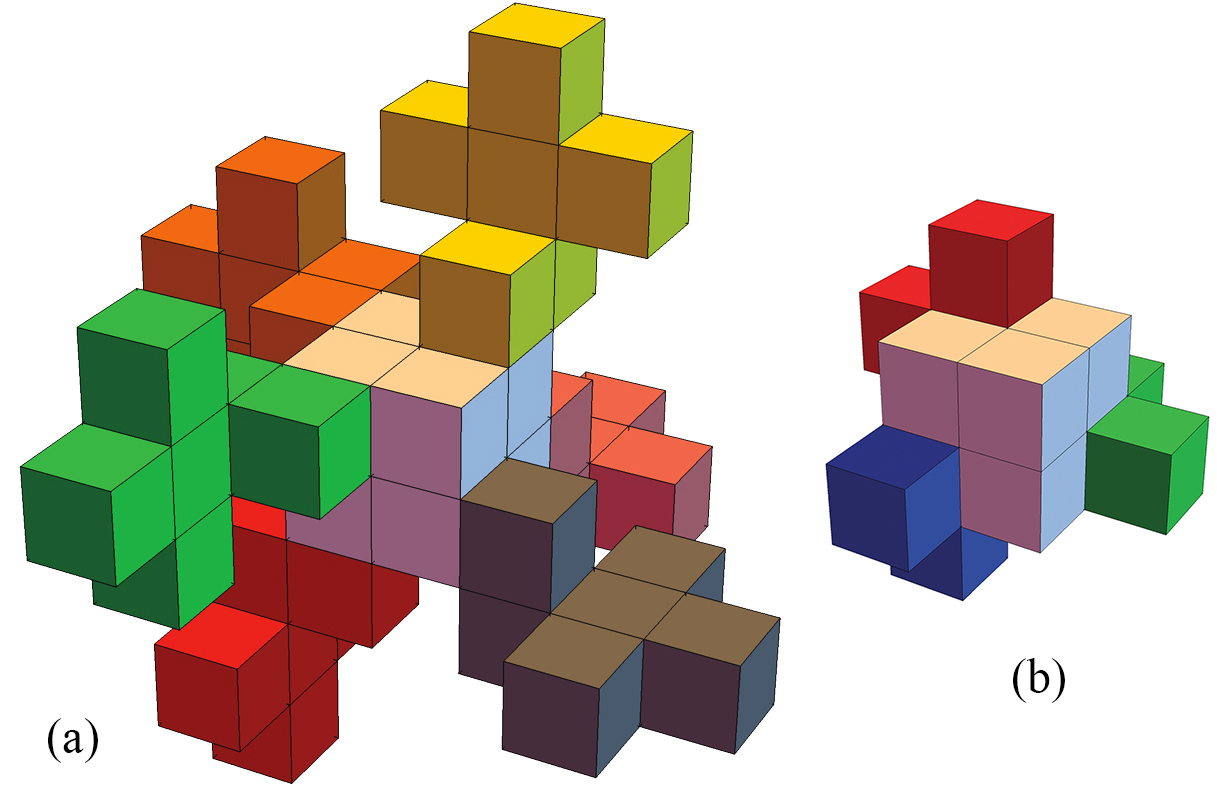}
\caption{Two polycubes that have no edge zipper unfolding.}
\figlab{TwoPolycubes}
\end{figure}

\section{Hamiltonian Paths}
\seclab{HamiltonianPaths}
Shephard~\cite{s-cpcn-75} introduced Hamiltonian unfoldings of convex polyhedra,
what we refer to here as edge zipper unfolding, following the terminology
of~\cite{lddss-zupc-10}.
Any edge zipper unfolding must cut along a Hamiltonian path of the vertices.
It is easy to see that not every convex polyhedron has an edge zipper unfolding,
simply because the rhombic dodecahedron has no Hamiltonian path.
This counterexample avoids confronting the difficult nonoverlapping condition.

We follow a similar strategy here, constructing a polycube with no Hamiltonian path.
But there is a difference in that a polycube edge zipper unfolding need not
include flat vertices, and so need not be a Hamiltonian path in $G_P$.
Thus identifying a polycube $P$ that has no Hamiltonian path does not
immediately establish that $P$ has no edge zipper unfolding,
if $P$ has flat vertices.

So one approach is to construct a polycube  $P$ that has no flat vertices---every
vertex is a corner vertex. Then, if $P$ has no Hamiltonian path, then it has
no edge zipper unfolding.
A natural candidate is the polycube object $P_6$ shown in
Fig.~\figref{P6}.
\begin{figure}[htbp]
\centering
\includegraphics[width=0.4\linewidth]{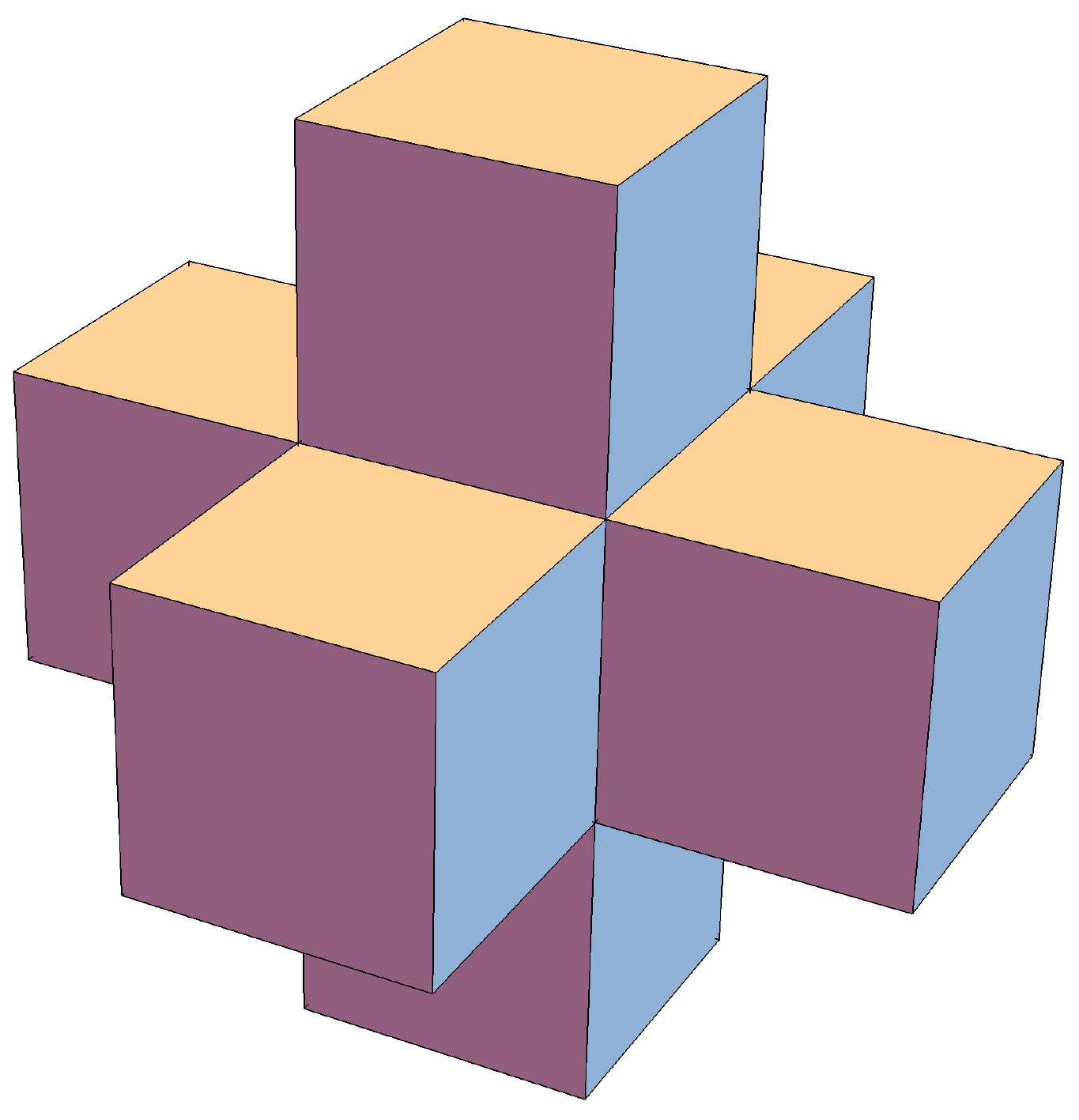}
\caption{All of $P_6$'s vertices are corner vertices.}
\figlab{P6}
\end{figure}
However, the $1$-skeleton of $P_6$ does admit Hamiltonian paths, and indeed
we found a path that unfolds $P_6$ to a net.

Let $\overline{G}_P$ be the dual graph of $P$: each cube is a node, and two nodes are
connected if they are glued face-to-face.
A \emph{polycube tree} is a polycube whose dual graph is a tree. $P_6$ is a polycube tree.
That it has a Hamiltonian path is an instance of a more general claim:

\begin{lemma}
The graph $G_P$ for any polycube tree $P$ has a Hamiltonian cycle.
\lemlab{trees}
\end{lemma}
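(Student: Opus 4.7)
The plan is to proceed by induction on the number of cubes in $P$. For the base case, $P$ is a single cube and $G_P$ is the $3$-cube graph $Q_3$, which is Hamiltonian by inspection; one can further arrange that the chosen cycle uses two opposite edges on each of the cube's six faces.

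For the inductive step, let $P$ have $n \ge 2$ cubes; its dual tree has a leaf cube $C$, glued to $P' = P \setminus C$ along a single face $F_0$ of $P'$. Label the vertices of $F_0$ in cyclic order as $v_1, v_2, v_3, v_4$, and label the four new vertices of $C$ on the face opposite $F_0$ as $w_1, \dots, w_4$ with $w_i$ lying directly above $v_i$. By induction $G_{P'}$ has a Hamiltonian cycle $H'$. My plan is to insert the four new vertices by a local \emph{detour}: if $H'$ uses an edge $e = v_i v_{i+1}$ of $F_0$, delete $e$ and splice in the path
\[
v_i \;-\; w_i \;-\; w_{i-1} \;-\; w_{i-2} \;-\; w_{i+1} \;-\; v_{i+1}
\]
(indices mod $4$), which visits all four $w_k$ by going the long way around the top of $C$. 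An equivalent alternative, convenient later, is to detour at two opposite edges of $F_0$ simultaneously, each replaced by a three-edge path through two of the $w_k$.

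To make the induction close, I would strengthen the hypothesis so that $H'$ necessarily uses an appropriate edge of $F_0$. The cleanest candidate is: for every polycube tree $P$, $G_P$ has a Hamiltonian cycle using a pair of opposite edges on every face of $P$. The single-cube base case still admits such a cycle. In the inductive step the strengthened hypothesis guarantees that $F_0$ offers an opposite pair on which to perform the two-edge detour, and the five new surface faces of $C$ (top and four sides) can be shown by direct inspection of the explicit detour to inherit the opposite-pair property.

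The main obstacle, which I expect to be the heart of the proof, is the pair of faces of $P'$ adjacent to $F_0$ along the two deleted edges: each loses one cycle edge, and one must argue that, among the opposite pairs of $F_0$ available in $H'$, there is always a choice whose deletion leaves an opposite pair intact on each of these neighboring faces. If this choice is not available for the natural strengthening, I would fall back on a more elaborate invariant---for instance, one that also records which opposite pair is in use on each face of $P'$ and allows rerouting $H'$ through an alternating cycle of face edges around $F_0$ before performing the detour.
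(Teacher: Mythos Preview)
Your overall plan---induct on the number of cubes, peel off a dual-tree leaf $C$, and splice the four new vertices $w_1,\dots,w_4$ into the Hamiltonian cycle $H'$ of $G_{P'}$ by a local detour at the glue face $F_0$---is exactly the paper's framework, and your single-edge detour is essentially the replacement the paper draws in its figure. Where you diverge is in how you guarantee that $H'$ meets $F_0$ in a usable way: you reach for a strengthened invariant (an opposite pair of cycle edges on every face), and then, as you yourself note, cannot keep that invariant intact on the two faces of $P'$ flanking the deleted edges. So as written the argument does not close.

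The paper sidesteps this entirely with a counting observation you are missing. It argues that \emph{any} Hamiltonian cycle $H'$ in $G_{P'}$ must already use either two or three of the four boundary edges of $F_0$: all four would make the boundary of $F_0$ the whole cycle, while zero or one would---the paper asserts---leave some vertex of $F_0$ unvisited. With two or three edges of $F_0$ already in $H'$, extending over $C$ is a two-case local replacement (one case is your single-edge detour, the other replaces two opposite edges), and no property of any neighboring face needs to be preserved. Your ``main obstacle'' therefore never arises, and no strengthened hypothesis is needed. If you adopt this route, the one step that rewards a moment's care is the lower bound of two edges of $F_0$ in $H'$; the paper's justification is a single clause, and it is worth checking for yourself why the vertices of a leaf's glue face cannot all be served by edges outside $F_0$.
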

\begin{proof}
It is easy to see by induction that every polycube tree can be built by gluing cubes
each of which touches just one face at the time of gluing: never is there a
need to glue a cube to more than one face of the previously built object.

A single cube has a Hamiltonian cycle. Now assume that every polycube tree 
of $\le n$ cubes has a Hamiltonian cycle. For a tree $P$ of $n+1$ cubes,
remove a $\overline{G}_P$ leaf-node cube $C$, and apply the induction hypothesis.
The exposed square face $f$ to which $C$ glues to make $P$ includes either $2$ or $3$ edges of
the Hamiltonian cycle ($4$ would close the cycle; $1$ or $0$ would imply the cycle
misses some vertices of $f$).
It is then easy to extend the Hamiltonian cycle to include $C$, as shown in 
Fig.~\figref{HamLem}.
\end{proof}
\begin{figure}[htbp]
\centering
\includegraphics[width=0.65\linewidth]{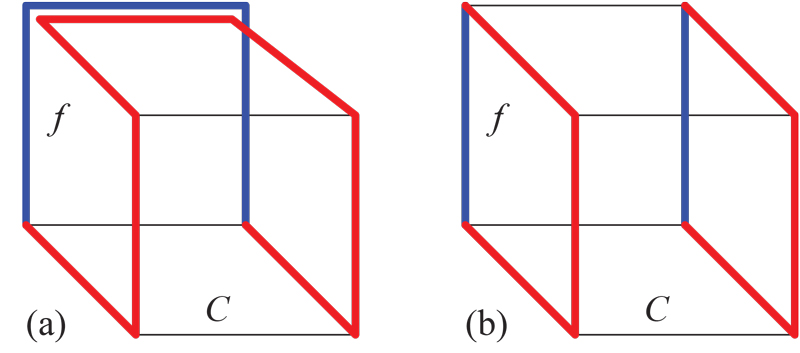}
\caption{(a)~$f$ contains $3$ edges of the cycle (blue);
(b) $f$ contains $2$ edges of the cycle. The cycles are extended to $C$ by 
replacing the blue with the the red paths.}
\figlab{HamLem}
\end{figure}

\medskip
\noindent
So to prove that a polycube tree has no edge zipper unfolding
would require an argument that confronted nonoverlap.
This leads to an open question:
\begin{question}
Does every polycube tree have an edge zipper unfolding?
\end{question}

\section{Bipartite $G_P$}
\seclab{Bipartite}
To guarantee the nonexistence of Hamiltonian paths, we can exploit the bipartiteness
of $G_P$, using Lemma~\lemref{parity-imbalance} below.

\begin{lemma}
A polycube graph $G_P$ is $2$-colorable, and therefore bipartite.
\lemlab{2-colorable}
\end{lemma}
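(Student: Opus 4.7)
The plan is to exploit the fact that a polycube sits naturally in the integer lattice. Place $P$ so that every cube has unit side length with corners at integer points. Then every vertex of $P$, whether a corner vertex or a flat vertex, has coordinates $(x,y,z)\in\Z^3$, and every edge of $P$ (an edge of some constituent cube that lies on the surface) connects two lattice points that differ by $\pm 1$ in exactly one coordinate.

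Define the coloring $\chi:V(G_P)\to\{0,1\}$ by $\chi(x,y,z)=(x+y+z)\bmod 2$. For any edge $uv$ of $G_P$, the endpoints differ by $1$ in a single coordinate and agree in the other two, so $\chi(u)\neq\chi(v)$. Hence $\chi$ is a proper $2$-coloring of $G_P$, which means $G_P$ is bipartite with the two color classes forming the bipartition.

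The only thing that really needs to be checked is that the ambient lattice structure is inherited by $G_P$ as we have described it, i.e., that no ``extra'' edges of the polycube surface connect two vertices of the same parity. This follows directly from the definition given earlier in the paper: the edges of $P$ are exactly the cube edges lying on the surface, and every cube edge joins two lattice points at unit distance along a coordinate axis. There are no diagonal or length-$2$ edges even when two coplanar faces meet, since the definition preserves each individual cube edge as its own edge of $G_P$. Thus the argument goes through with no further obstacle, and the lemma is immediate.

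I expect no serious obstacle here; the statement is essentially a one-line observation about the integer lattice, and the ``proof'' is just the parity argument above.
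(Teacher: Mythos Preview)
Your argument is correct and is essentially identical to the paper's own proof: both place $P$ in $\Z^3$ and color each vertex by the parity of the sum of its coordinates. Your version is actually more explicit than the paper's in verifying that every edge of $G_P$ joins vertices of opposite parity.
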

\begin{proof}
Label each lattice point $p$ of $\Z^3$ with the $\{0,1\}$-parity of the
sum of the Cartesian coordinates of $p$.
A polycube $P$'s vertices are all lattice points of $\Z^3$.
This provides a $2$-coloring of $G_P$; $2$-colorable graphs are bipartite.
\end{proof}

The \emph{parity imbalance} in a $2$-colored (bipartite) graph is the
absolute value of the difference in the number of nodes of each color.

\begin{lemma}
A bipartite graph $G$ with a parity imbalance $> 1$ has no
Hamiltonian path.\footnote{
Stated at \url{http://mathworld.wolfram.com/}
\url{HamiltonianPath.html}.
}
\lemlab{parity-imbalance}
\end{lemma}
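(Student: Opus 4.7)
The plan is to prove the contrapositive via a direct counting argument exploiting the alternating structure forced on paths in bipartite graphs. Let $G$ be bipartite with color classes $A$ and $B$, and suppose for contradiction that $G$ has a Hamiltonian path $v_1, v_2, \ldots, v_n$ where $n = |A| + |B|$.

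First I would observe that, because every edge of $G$ joins a vertex of $A$ to a vertex of $B$, consecutive vertices $v_i, v_{i+1}$ along the path lie in opposite color classes. Hence the colors of $v_1, v_2, \ldots, v_n$ strictly alternate between the two classes. Next I would count the occurrences of each class along this alternating sequence, splitting into two cases according to the parity of $n$. If $n$ is even, the sequence contains exactly $n/2$ vertices of each color, giving $|A| = |B|$ and parity imbalance $0$. If $n$ is odd, the endpoints $v_1$ and $v_n$ lie in the same class, so one class contains $(n+1)/2$ vertices and the other contains $(n-1)/2$, giving parity imbalance exactly $1$.

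In either case the parity imbalance of $G$ is at most $1$, contradicting the assumption that it exceeds $1$. Therefore no Hamiltonian path can exist.

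There is really no obstacle here: the argument is a one-line alternation/counting observation, and the only thing worth being careful about is handling the parities of $n$ separately so that the bound of $1$ is tight (which is important for the applications, since imbalance $1$ need not rule out a Hamiltonian path). No appeal to any earlier lemma in the excerpt is needed beyond the definition of bipartiteness already used in Lemma~\lemref{2-colorable}.
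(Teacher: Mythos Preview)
Your argument is correct and matches the paper's proof essentially line for line: both observe that vertices along a Hamiltonian path must alternate colors, then count to conclude the imbalance is $0$ when $n$ is even and $1$ when $n$ is odd. The paper's version is just terser.
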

\begin{proof}
The nodes in a Hamiltonian path alternate colors $010101 \ldots$.
Because by definition a Hamiltonian path includes every node,
the parity imbalance in a bipartite graph with a Hamiltonian path is either $0$ (if of even length)
or $1$ (if of odd length).
\end{proof}

So if we can construct a polycube $P$ that (a)~has no flat vertices,
and (b)~has parity imbalance $>1$, then we will have established
that $P$ has no Hamiltonian path, and therefore no edge zipper unfolding.
We now show that the polycube $P_{44}$, illustrated in
Fig.~\figref{Polycube_44}, meets these conditions.
\begin{figure}[htbp]
\centering
\includegraphics[width=0.75\linewidth]{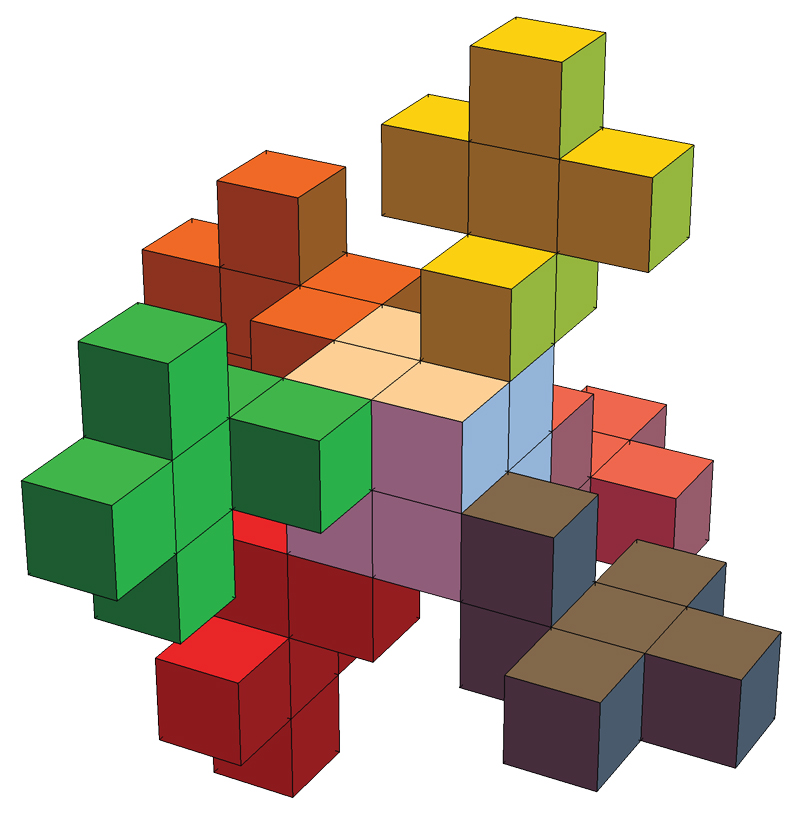}
\caption{The polycube $P_{44}$, consisting of $44$ cubes, has no Hamiltonian path.}
\figlab{Polycube_44}
\end{figure}

\begin{lemma}
The polycube $P_{44}$'s graph $G_{P_{44}}$ has parity imbalance of $2$.
\lemlab{P44parity}
\end{lemma}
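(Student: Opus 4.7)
The plan is to verify the stated parity imbalance by direct enumeration of the surface vertices of $P_{44}$, sorted by the parity of the sum of their integer coordinates. Since $P_{44}$ is a concrete finite object built from only $44$ unit cubes, this reduces to a finite tabulation whose only real difficulty lies in organizing it to avoid error.

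First, I would fix a coordinate system and read off from Figure \figref{Polycube_44} the list of $44$ lattice points in $\Z^3$ that serve as the ``minimum corners'' of the constituent cubes. From this list one derives the vertex set $V$ of $P_{44}$: a lattice point $v$ lies in $V$ precisely when, among the (up to) eight unit cubes of $\Z^3$ incident to $v$, at least one belongs to $P_{44}$ and at least one does not. I would then color each $v = (x,y,z) \in V$ by the parity of $x+y+z$, as in Lemma \lemref{2-colorable}, and tally the counts $N_0$ and $N_1$ of the two color classes. The assertion $|N_0 - N_1| = 2$ is then an arithmetic verification.

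To keep the bookkeeping manageable, I would exploit whatever symmetry is visible in Figure \figref{Polycube_44}, splitting $P_{44}$ along planes of symmetry into congruent pieces, counting contributions within each piece, and then correcting for vertices shared on the fixed plane. A convenient alternative is to slice $P_{44}$ into horizontal slabs of constant $z$; within each slab the parity reduces to $(x+y) \bmod 2$ shifted by a known constant depending on $z \bmod 2$, and the per-slab contributions can be added with little risk of double-counting.

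The main obstacle is not conceptual but combinatorial: one must correctly distinguish interior lattice points (which contribute no surface vertex) from boundary ones, and avoid double-counting vertices incident to several cubes. Since a miscount of even a single vertex would flip or destroy the claimed imbalance, the enumeration should be cross-checked -- for instance by recomputing $|V|$, $N_0$, and $N_1$ via two independent decompositions (slab-by-slab and symmetry-based) and confirming that they agree with $N_0 - N_1 = \pm 2$.
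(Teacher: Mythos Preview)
Your approach is correct in principle---$P_{44}$ is finite, so a direct tabulation of surface vertices by coordinate-sum parity certainly establishes the claim---but it takes a genuinely different route from the paper, and a much more laborious one. The paper avoids enumerating the full vertex set of $P_{44}$ by a structural argument: it first counts only the $26$ surface vertices of the $2\times 2\times 2$ core $P_{222}$ (eight cube-corners colored $0$, twelve edge-midpoints colored $1$, six face-centers colored $0$, giving imbalance $2$), and then observes that gluing a single cube along exactly one face adds four new vertices whose colors are $1010$, hence balanced. Since $P_{44}$ is obtained from $P_{222}$ by a sequence of such single-face attachments (six copies of the cross $P_+$, each itself buildable cube-by-cube along single faces), the imbalance of $2$ is preserved throughout. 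This buys two things over your enumeration: it replaces a large error-prone count by a trivial one on $26$ vertices plus a one-line invariance lemma, and---more importantly---it is reusable, as the same invariance immediately yields the imbalance of $P_{14}$ in the next lemma. Your brute-force count, by contrast, would need to be redone from scratch for each new polycube.
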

\begin{proof}
Consider first the $2 \times 2 \times 2$ cube that is the core of $P_{44}$; call it $P_{222}$. 
The front face $F$ has an extra $0$;
see Fig.~\figref{Parity_2x2}.
\begin{figure}[htbp]
\centering
\includegraphics[width=0.4\linewidth]{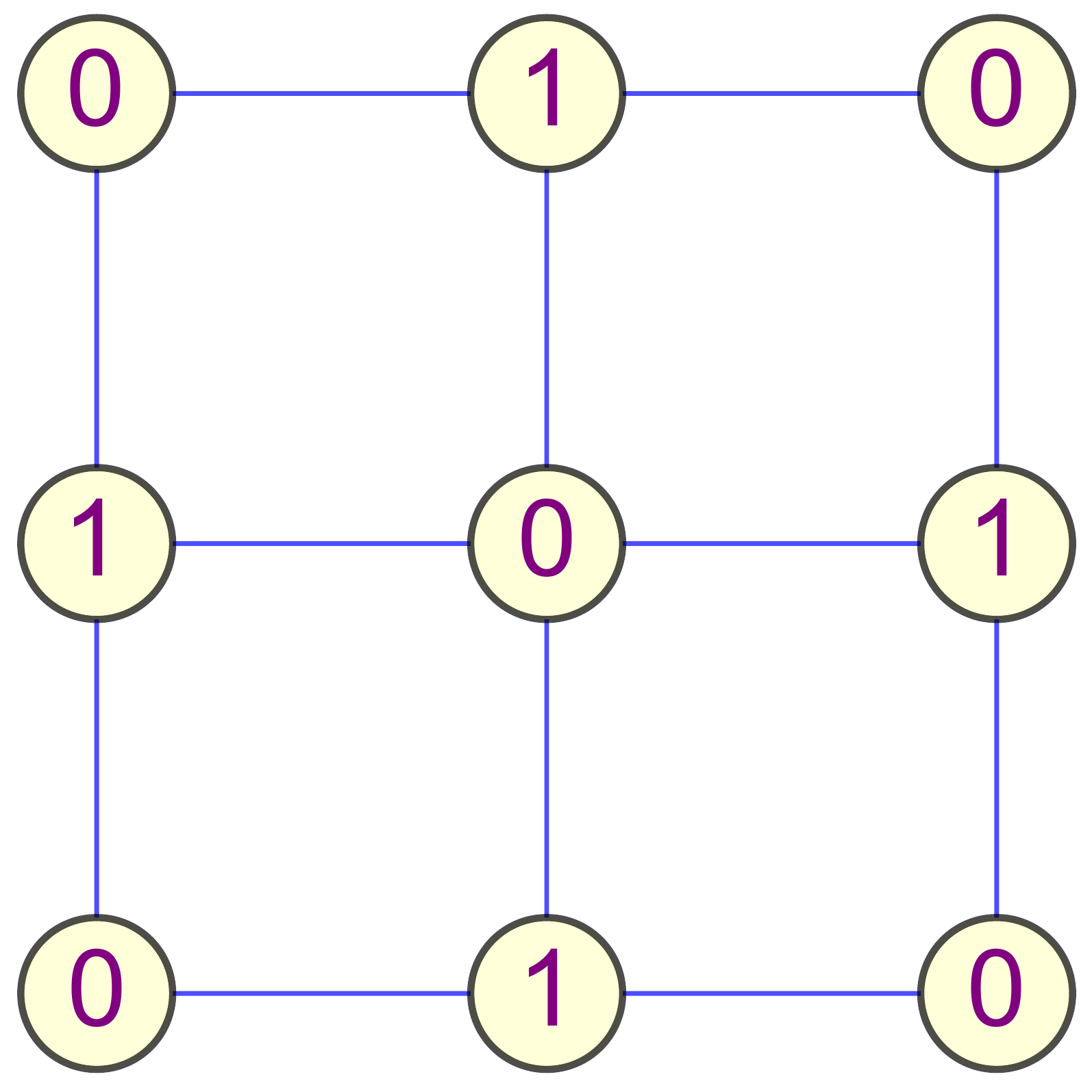}
\caption{$2$-coloring of one face of $P_{222}$.}
\figlab{Parity_2x2}
\end{figure}
It is clear that the $8$ corners of $P_{222}$ are all colored $0$.
The midpoint vertices of the $12$ edges of $P_{222}$ are colored $1$.
Finally the $6$ face midpoints are colored $0$. So $14$ vertices are colored $0$ and $12$ colored $1$.

Next observe that attaching a cube $C$ to exactly one face of any polycube does not change the parity:
the receiving face $f$ has colors $0101$, and the opposite face of $C$ has
colors $1010$.

Now, $P_{44}$ can be constructed by attaching six copies of a
$6$-cube ``cross,'' call it $P_+$, which in isolation is a polycube tree and so can be built
by attaching cubes each to exactly one face. And each $P_+$ attaches to one
corner cube of $P_{222}$.
Therefore $P_{44}$
retains $P_{222}$'s imbalance of $2$.
\end{proof}

\medskip
\noindent
The point of the $P_+$ attachments is to remove the flat vertices of $P_{222}$.
Note that when attached to $P_{222}$, each $P_+$ has only corner vertices.

\begin{theorem}
Polycube $P_{44}$ has no edge zipper unfolding.
\thmlab{NoUnzipping44}
\end{theorem}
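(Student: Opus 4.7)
The plan is to chain together the three lemmas of this section with the observation flagged in the remark after Lemma~\lemref{P44parity}: every surface vertex of $P_{44}$ is a corner vertex. Under that observation, any edge zipper unfolding of $P_{44}$---being a path in the $1$-skeleton that must span all corner vertices---is forced to be a Hamiltonian path of the \emph{full} graph $G_{P_{44}}$. So the theorem reduces to verifying two facts: (i) $P_{44}$ has no flat vertices, and (ii) $G_{P_{44}}$ admits no Hamiltonian path.

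Part (ii) is immediate from the earlier work. By Lemma~\lemref{2-colorable} the graph $G_{P_{44}}$ is bipartite; by Lemma~\lemref{P44parity} its parity imbalance is $2$; and therefore by Lemma~\lemref{parity-imbalance} it has no Hamiltonian path. This chain of invocations is the only role these lemmas need to play in the theorem.

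Part (i) is a finite geometric check. The core $P_{222}$ contributes exactly $18$ candidate flat vertices---the midpoints of its $12$ exterior edges together with the centers of its $6$ faces---and I would enumerate these and confirm that attaching the six crosses $P_+$ either absorbs such a vertex into the interior of the union or breaks its local $360^\circ$ angular sum, converting it into a corner. Simultaneously I would verify that each exposed vertex of every $P_+$, once attached to a corner cube of $P_{222}$, also has incident face angle different from $360^\circ$; this is the content of the remark that each attached $P_+$ ``has only corner vertices.'' Together with (ii) and the reduction in the first paragraph, this completes the proof.

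I expect the main obstacle to be precisely this enumeration. The subtle points are at the interface between a $P_+$ and $P_{222}$, where a sloppy orientation of the cross on its host corner cube could leave behind a vertex where two coplanar faces still meet at $360^\circ$, and at any lattice point where arms of the construction threaten to create a new accidental flat junction. Once the local geometry at each of the eight attachment sites is pinned down and confirmed, nothing else remains to check and the theorem follows at once.
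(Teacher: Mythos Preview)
Your proposal is correct and follows essentially the same approach as the paper: verify that $P_{44}$ has no flat vertices (the paper defers this to ``scrutiny of Fig.~\figref{Polycube_44}'' while you sketch an enumeration), then chain Lemmas~\lemref{P44parity} and~\lemref{parity-imbalance} to rule out a Hamiltonian path. One small slip: you refer to ``eight attachment sites'' at the end, but there are six crosses attached to six of the eight corner cubes of $P_{222}$; this does not affect the argument.
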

\begin{proof}
Although it takes some scrutiny of Fig.~\figref{Polycube_44} to verify, 
$P_{44}$ has no (degree-$4$) flat vertices.
Thus an edge zipper unfolding must pass through every vertex, and so be a Hamiltonian path.
Lemma~\lemref{P44parity} says that $G_{P_{44}}$ has imbalance $2$, and
Lemma~\lemref{parity-imbalance} says it therefore cannot have a Hamiltonian path.
\end{proof}

\section{Construction of $P_{14}$}
\seclab{Construction}
It turns out that the smaller polycube  $P_{14}$ shown in
Fig.~\figref{Polycube_14} also has no edge zipper unfolding, even though it has flat vertices.
\begin{figure}[htbp]
\centering
\includegraphics[width=0.6\linewidth]{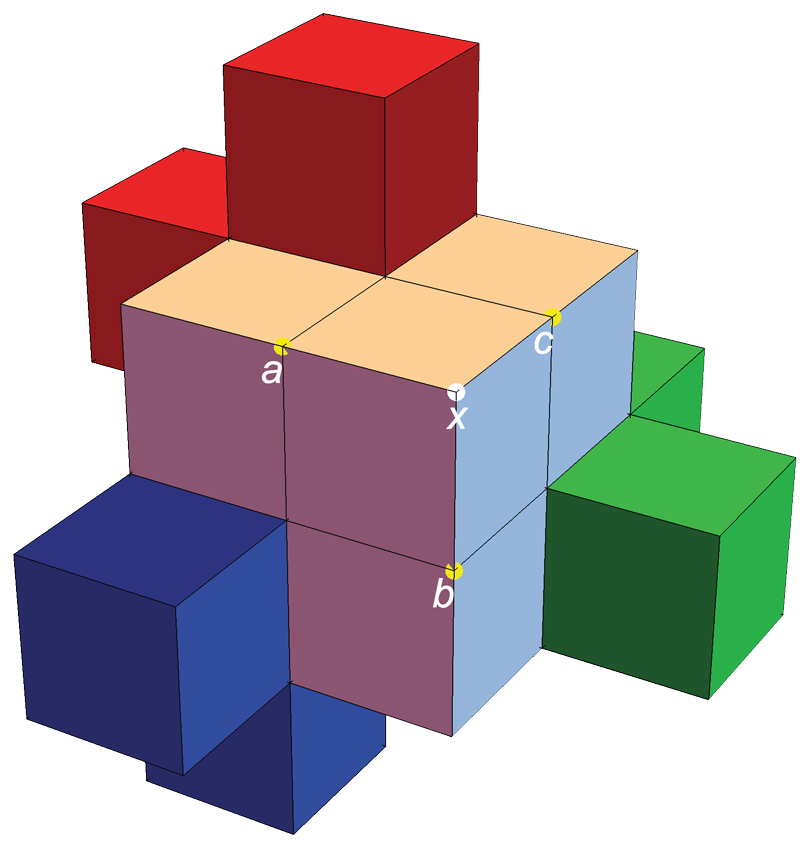}
\caption{$P_{14}$: $P_{222}$ with six $1$-cube attachments.}
\figlab{Polycube_14}
\end{figure}
To establish this, we still need an imbalance $>1$, which easily 
follows just as in 
Lemma~\lemref{P44parity}:
\begin{lemma}
The polycube $P_{14}$'s graph $G_{P_{14}}$ has parity imbalance of $2$.
\lemlab{P*parity}
\end{lemma}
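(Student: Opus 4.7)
The plan is to repeat the strategy of Lemma~\lemref{P44parity} almost verbatim, with the six cross-shaped attachments $P_+$ replaced by six single-cube attachments. In particular, I expect to reuse two ingredients already established in that proof: the parity count for the $2 \times 2 \times 2$ core $P_{222}$, and the observation that gluing one cube to a single exposed face preserves the parity imbalance.

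First I would recall that $P_{222}$ has parity imbalance $2$: its $8$ corners and $6$ face-midpoints are colored $0$, while its $12$ edge-midpoints are colored $1$, giving $14$ vertices of color $0$ and $12$ of color $1$. Next I would note that, by inspection of Fig.~\figref{Polycube_14}, $P_{14}$ is obtained from $P_{222}$ by attaching six single cubes, one to each of six faces of $P_{222}$, in such a way that each attached cube touches the current object along a single square face. The key lemma-in-the-proof of Lemma~\lemref{P44parity} says that attaching a cube along exactly one face preserves parity: the receiving face already contributes two $0$s and two $1$s, and the four new vertices (on the opposite face of the attached cube) form the complementary pattern $1010$, so they add two $0$s and two $1$s as well. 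Applying this fact six times in succession shows that the imbalance of $G_{P_{14}}$ equals that of $G_{P_{222}}$, namely $2$.

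I do not foresee any real obstacle; the only thing to verify is the geometric claim that each of the six attachments is made along a single face of the object produced by the previous attachments, so that the single-face preservation argument applies at each step. This is immediate from the picture since the six attachment sites are on the six outward-pointing faces of the central $P_{222}$ and are mutually disjoint, so no attached cube ever meets another attached cube.
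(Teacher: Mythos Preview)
Your proposal is correct and is essentially the paper's own argument: the paper does not give an explicit proof of this lemma but simply notes that it ``easily follows just as in Lemma~\lemref{P44parity},'' and you have spelled out exactly that---start from the imbalance-$2$ core $P_{222}$ and apply the single-face attachment observation six times. The only quibble is the phrase ``one to each of six faces of $P_{222}$,'' which slightly over-specifies the geometry relative to what is needed; all that matters (and all the paper uses) is that each of the six cubes is glued along a single face and that the attachments are mutually disjoint, which you correctly verify.
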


\noindent
But notice that $P_{14}$ has three flat vertices: $a$, $b$, and $c$.

\begin{theorem}
Polycube $P_{14}$ has no edge zipper unfolding.
\thmlab{NoUnzipping14}
\end{theorem}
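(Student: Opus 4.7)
My plan is to refine the parity obstruction of Lemma~\lemref{parity-imbalance} so that it still forbids the cut even when flat vertices may be skipped. An edge zipper unfolding of $P_{14}$ is a path in $G_{P_{14}}$ that must pass through every corner vertex, but that is free to include or skip each of the three flat vertices $a$, $b$, $c$. The crucial structural observation to establish first is that $a$, $b$, $c$ all lie in the \emph{same} color class of the $2$-coloring from Lemma~\lemref{2-colorable}, and moreover in the \emph{minority} class of $G_{P_{14}}$. I would verify this by reading the coordinates of $a$, $b$, $c$ off Fig.~\figref{Polycube_14} and checking the parities of their coordinate sums.

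With that in hand the counting is immediate. Write $B$ and $W$ for the sizes of the two color classes; by Lemma~\lemref{P*parity} we may assume $B - W = 2$ and that $W$ is the minority class. Since all three flat vertices lie in the minority class, the corners split as $B_c = B$ and $W_c = W - 3$, so the set of corners has color imbalance $B_c - W_c = 5$. Any candidate cut path visits every corner plus some subset $S \subseteq \{a,b,c\}$ of flat vertices, so its visited vertex set has color imbalance $|5 - |S||$, which as $|S|$ ranges over $\{0,1,2,3\}$ takes the values $5, 4, 3, 2$.

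Each of these exceeds $1$. But along a path in a bipartite graph, colors alternate (as in the proof of Lemma~\lemref{parity-imbalance}), so the visited vertex set can have imbalance at most $1$. This contradiction shows no such path exists, proving $P_{14}$ has no edge zipper unfolding. The step that requires the most care is the color verification of $a$, $b$, $c$: if even one of them lay in the majority class, then for a suitable $S$ the visited imbalance would drop to $0$ or $1$, and the parity argument would fail outright; the whole proof hinges on all three flat vertices being of the minority color.
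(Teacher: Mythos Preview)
Your proof is correct and uses essentially the same parity-imbalance idea as the paper: both hinge on the observation that the three flat vertices $a,b,c$ all lie in the minority color class, so deleting any subset of them only worsens the imbalance beyond the threshold of Lemma~\lemref{parity-imbalance}. The paper carries this out as an explicit four-case analysis on which of $a,b,c$ the path includes, and for the case where all three are skipped it invokes a separate connectivity observation (the corner $x$ is adjacent only to $a,b,c$, hence unreachable); your unified imbalance count $5-|S|\ge 2$ handles that last case by parity alone, which is a mild streamlining but not a different route.
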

\begin{proof}
An edge zipper unfolding need not pass through the three flat vertices, $a$, $b$, and $c$, but it could pass through
one, two, or all three.
We show that in all cases, an appropriately modified subgraph of $G_{P_{14}}$ has no Hamiltonian path.
Let $\rho$ be a hypothetical edge zipper unfolding cut path. We consider four exhaustive possibilities,
and show that each leads to a contradiction.
\begin{enumerate}[(1)]
\setcounter{enumi}{-1}
\item \textbf{$\rho$ includes $a,b,c$.}
So $\rho$ is a Hamiltonian path in $G_{P_{14}}$.
But Lemma~\lemref{P*parity} says that $G_{P_{14}}$ has imbalance $2$, and
Lemma~\lemref{parity-imbalance} says that no such graph has a Hamiltonian path.
\item \textbf{$\rho$ excludes one flat vertex $a$ and includes $b,c$.}
(Because of the symmetry of $P_{14}$, it is no loss of generality to assume that it is $a$ that is excluded.)
If $\rho$ excludes $a$, then it does not travel over any of the four edges incident to $a$.
Thus we can delete $a$ from $G_{P_{14}}$; say that $G_{-a}= G_{P_{14}} \setminus a$.
This graph is shown in Fig.~\figref{UnzipHP2x2x2}.
\begin{figure}[htbp]
\centering
\includegraphics[width=0.9\linewidth]{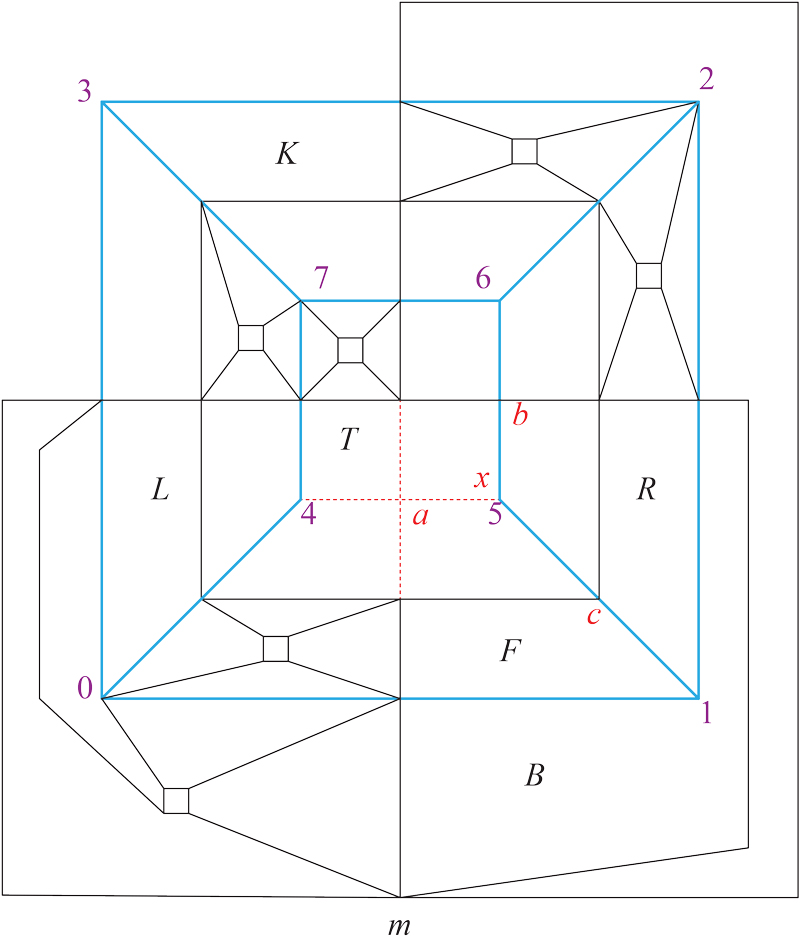}
\caption{Schlegel diagram of $G_{-a}$. 
We follow~\protect\cite{damian2018unfolding} in labeling the
faces of a cube as $F, K, R, L, T, B$ for Front, bacK, Right, Left, Top, Bottom respectively.
The corners of $P_{222}$
are labeled $0,1,2,3$ around the bottom face $B$, and $4,5,6,7$ around the top face $T$.
$m$ is the vertex in the middle of $B$. The edges deleted by removing vertex $a$ are shown
dashed.}
\figlab{UnzipHP2x2x2}
\end{figure}
Following the coloring in Fig.~\figref{Parity_2x2}, all corners of $P_{222}$ are colored $0$,
so each of the edge midpoints $a,b,c$ is colored $1$.
The parity imbalance of $P_{14}$ is $2$ extra $0$'s.
Deleting $a$ maintains bipartiteness and increases the parity imbalance of  $G_{-a}$ to $3$.
Therefore by Lemma~\lemref{parity-imbalance}, $G_{-a}$ has no Hamiltonian path,
and such a $\rho$ cannot exist.
\item \textbf{$\rho$ includes just one flat vertex $c$, and excludes $a,b$.}
(Again symmetry ensures there is no loss of generality in assuming the one included flat vertex is $c$.)
$\rho$ must include corner $x$, which is only accessible in $G_{P_{14}}$ through 
the three flat vertices. If $\rho$ excludes $a,b$, then it must include the edge $cx$.
Let $G_{-ab} = G_{P_{14}} \setminus \{a,b\}$.
In $G_{-ab}$, $x$ has degree $1$, so $\rho$ terminates there.
It must be that $\rho$ is a Hamiltonian path in $G_{-ab}$, but the deletion
of $a,b$ increases the parity imbalance to $4$, and so again such a 
Hamiltonian path cannot exist.
\item \textbf{$\rho$ excludes $a,b,c$.}
Because corner $x$ is only accessible through one of these flat vertices,
$\rho$ never reaches $x$ and so cannot be an edge zipper unfolding.
\end{enumerate}
Thus the assumption that there is an edge zipper unfolding cut path $\rho$
for $P_{14}$ reaches a contradiction in all four cases.
Therefore, there is no edge zipper unfolding cut path for $P_{14}$.\footnote{
Just to verify this conclusion, we constructed these graphs in Mathematica
and \texttt{FindHamiltonianPath[]}  returned \texttt{\{\}} for each.}
\end{proof}

\section{\boldmath Edge Unfoldings of  $P_{14}$ and $P_{44}$}
Now that it is known that $P_{14}$ and $P_{44}$ each have no edge zipper unfolding, it is natural to wonder
whether either settles the edge-unfolding open problem:
can they be edge unfolded? Indeed both can:
see Figures~\figref{P14Unf} 
\begin{figure}[htbp]
\includegraphics[width=0.7\linewidth]{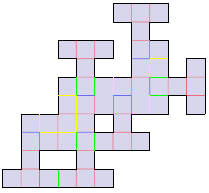}
\caption{Edge unfolding of $P_{14}$.
Colors: green${}={}$cut, red${}={}$mountain, blue${}={}$valley, yellow${}={}$flat.}
\figlab{P14Unf}
\end{figure}
and~\figref{P44Unf}.
The colors in these layouts are those used by Origami Simulator~\cite{ghassaei2018fast}.
\begin{figure}[htbp]
\centering
\includegraphics[width=1.0\linewidth]{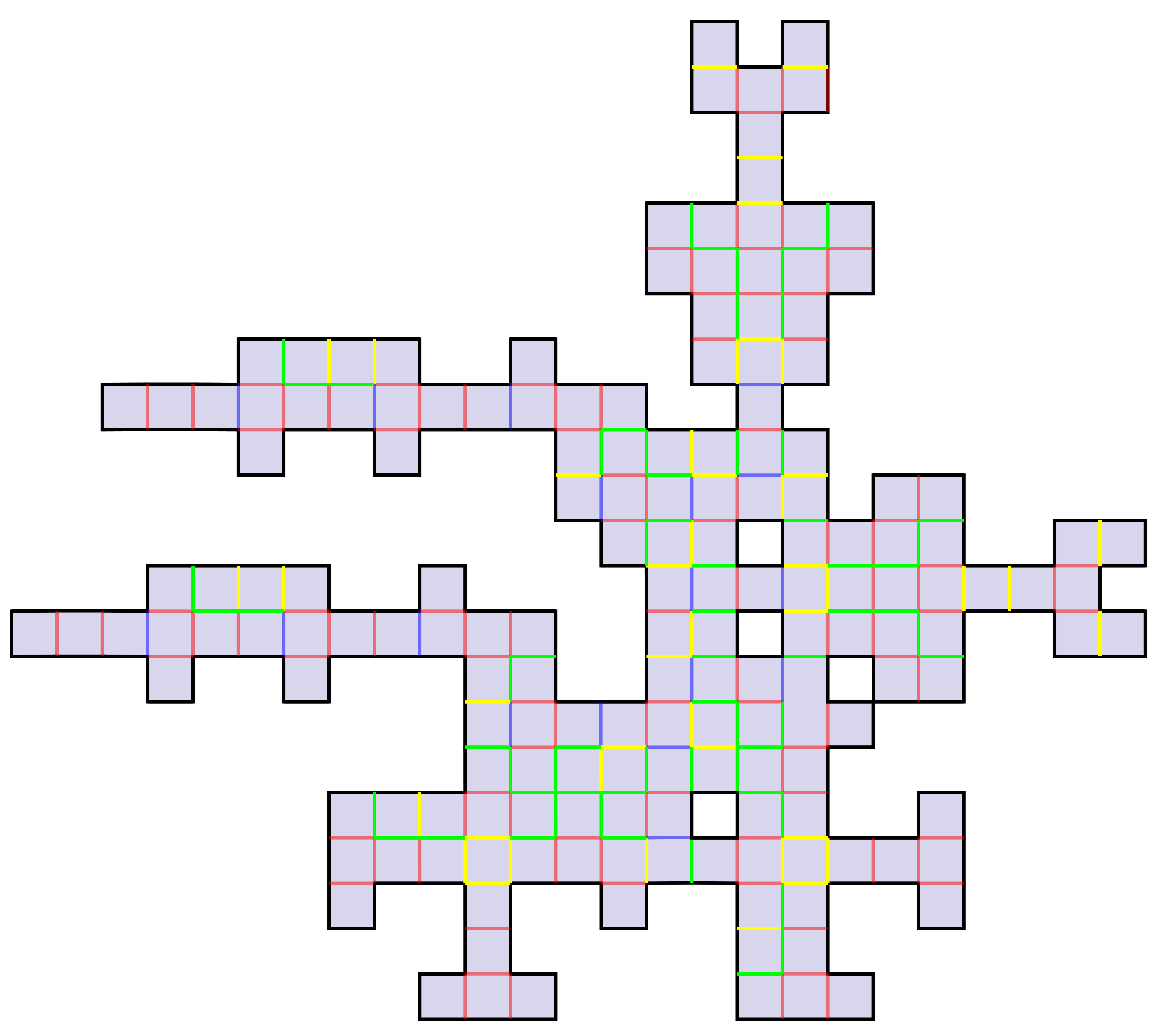}
\caption{Edge unfolding of $P_{44}$.
Colors: green${}={}$cut, red${}={}$mountain, blue${}={}$valley, yellow${}={}$flat.}
\figlab{P44Unf}
\end{figure}
Fig.~\figref{P44_80percent} shows a partial folding of $P_{44}$,
and animations are at \url{http://cs.smith.edu/~jorourke/Unf/NoEdgeUnzip.html}.
\begin{figure}[htbp]
\centering
\includegraphics[width=0.75\linewidth]{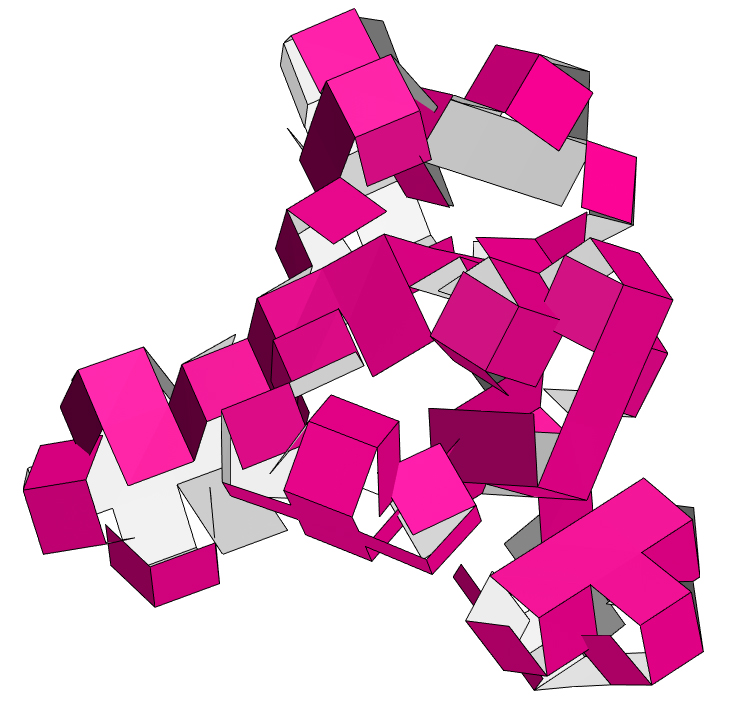}
\caption{Partial folding of the layout in Fig.~\protect\figref{P44Unf}.
Compare with Fig.~\protect\figref{Polycube_44}.}
\figlab{P44_80percent}
\end{figure}

\section{Many Polycubes with No Edge Zipper Unfolding}
As pointed out by Ryuhei Uehara,\footnote{
Personal communication, June 2020.}
$P_{44}$ can be extended to an infinite number of polycubes with no edge zipper
unfolding.
Let $P'_6$ be the polycube in
Fig.~\figref{P6} with the bottom cube removed. So $P'_6$ has a `$+$' sign
of five cubes in its base layer. Let $B$ be the bottom face of the cube at the center
of the `$+$' sign.
Attach $P'_6$ to the highest cube of $P_{44}$ in
Fig.~\figref{TwoPolycubes}(a)
by gluing $B$ to the top face of that top cube. It is easy to verify that
all new vertices of this augmented object, call it $P'_{44}$, are corners.
The joining process can be repeated with another copy of $P'_6$,
producing $P''_{44}$, and so on. All of these polycubes have 
no zipper unfolding.

We have not attempted to edge-unfold these larger objects.

\section{Open Problems}
The most interesting question remaining in this line of investigation
is \textbf{Question~1} (Sec.~2):
Does every polycube tree have an edge zipper unfolding?

\paragraph{Acknowledgements.}
We thank participants of the Bellairs 2018 workshop for their insights.
We benefitted from suggestions by the referees.

\bibliographystyle{alpha}
\bibliography{NoEdgeUnzip}
\end{document}